\documentclass[11pt]{article}
\usepackage[margin=1in]{geometry}
\usepackage[T1]{fontenc}
\usepackage{lmodern,microtype}
\usepackage{amsmath,amssymb,amsthm}
\usepackage{xcolor}
\usepackage{float}
\usepackage{needspace}
\usepackage{tikz}
\usetikzlibrary{arrows.meta}
\usepackage[hidelinks]{hyperref}
\usepackage[nameinlink,noabbrev]{cleveref}

\newtheorem{theorem}{Theorem}[section]
\newtheorem{lemma}[theorem]{Lemma}
\newtheorem{corollary}[theorem]{Corollary}
\newtheorem{claim}[theorem]{Claim}
\crefname{claim}{claim}{claims}
\Crefname{claim}{Claim}{Claims}
\newcommand{\Z}{\mathbb Z}
\newcommand{\OO}{\widetilde O}
\newcommand{\ind}{\mathbf 1}

\title{A simplified proof of polynomial improvements
for 3SUM and APSP}
\author{
Sankeerth Rao Karingula\\
{Agentin AI}\\
{\texttt{sankeerthrao@agentin.ai}}
\and
Shachar Lovett\thanks{Supported by Simons Investigator Award \#929894 and NSF award CCF-2425349.}\\
{University of California, San Diego}\\
{\texttt{slovett@ucsd.edu}}
}

\begin{document}
\maketitle
\begin{abstract}
We give a simplified proof of the polynomial improvements for integer
3SUM and all-pairs shortest paths (APSP) established by Alman and
Vassilevska Williams, with weaker exponents. The central ingredient is
selected-entry matrix multiplication: computing a prescribed set of
entries of a product of dense rectangular matrices. Starting from
Sch\"onhage's inner/outer-product identity, we construct a multiplication
rule that shares intermediate products among many outputs. We apply
this rule recursively and bound the work by counting the requested
entries at each level.

Applied to incidence matrices, the algorithm solves Offline Set
Disjointness over a small universe. Known deterministic reductions
then yield truly subquadratic integer 3SUM and truly subcubic APSP
for polynomially bounded integer inputs.
\end{abstract}

\section{Introduction}\label{sec:introduction}

Alman and Vassilevska Williams~\cite{alman2026truly} established the first
truly subquadratic algorithms for integer 3SUM and truly subcubic
algorithms for all-pairs shortest paths with polynomially bounded
integer inputs. Their main innovation is a matrix multiplication
algorithm that exploits sparsity in the set of requested outputs,
even when the input matrices are dense. We give a simplified proof
of these qualitative improvements, with weaker exponents.

We focus on the matrix multiplication argument. To obtain the 3SUM
and APSP consequences, we first apply it to Offline Set Disjointness,
where matrix entries directly encode intersection sizes. We then use
existing deterministic reductions from 3SUM and APSP to this problem
as black boxes.

\paragraph{Selected-entry matrix multiplication.}
Let $X$ be an $n\times d$ matrix and $Y$ a $d\times m$ matrix.
Each entry of $XY$ is a dot product of a row of $X$ and a column
of $Y$. We call the task of computing $w$ specified entries of $XY$
\emph{selected-entry matrix multiplication}. Computing these entries
as separate dot products uses $wd$ multiplications, while writing
down the full product takes $\Omega(nm)$ time. When the middle
dimension $d$ and the number of requested entries $w$ are small enough,
sharing intermediate computations gives the following bound.

\begin{theorem}[Selected-entry matrix multiplication]\label{thm:matrix_mult}
For every fixed $\tau\in(0,1]$, there is an integer
$\kappa=\kappa(\tau)=O(\tau^{-2})$ with the following property.
Let $d^\kappa\le\min\{n,m\}$. Given integer matrices
$X\in\Z^{n\times d}$ and $Y\in\Z^{d\times m}$, with entries of
polynomial magnitude in $n+m$, and $w\ge1$ distinct specified positions
in $XY$, those entries can be computed deterministically in time
\begin{equation}\label{eq:partial}
 \OO\bigl((n+m)d^\kappa+d^\tau\sqrt{nmw}\bigr).
\end{equation}
\end{theorem}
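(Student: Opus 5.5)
The plan is to build a bilinear multiplication rule from Schönhage's inner/outer-product identity and apply it recursively on the middle dimension. At each level we only materialize the intermediate products needed by the requested entries. Preprocessing of the inputs costs $(n+m)d^\kappa$. The recursion then touches only $O(d^\tau)$ "sub-requests" per requested entry, after a Cauchy--Schwarz/balancing step that produces the $\sqrt{nmw}$ term.

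\textbf{Step 1: a sharing rule.} Fix a block size $b$ and split the middle index $[d]$ into $d/b$ groups. For each row $x$ of $X$ and each group, precompute a constant number of linear combinations (sums and weighted sums) of the entries in that group. Do the same for each column $y$ of $Y$. Schönhage's identity writes $\sum_k x_k y_k$ as a combination of terms of the form $(\sum_k x_k)(\sum_k y_k)$ plus correction terms. A correction term depends only on $x$ (or only on $y$), or on a smaller residual inner product. Hence the per-row and per-column quantities, computed in $O((n+m)d)$ time, cover most of the work. Each requested entry $(i,j)$ is then reduced to a residual dot product over a shorter index set, or over a coarser alphabet of combined coordinates.

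\textbf{Step 2: recursion and counting.} Iterate the rule $\kappa=O(\tau^{-2})$ times. Each level multiplies the precomputation cost by a factor $d^{O(1/\kappa)}$-ish; this is the source of $(n+m)d^\kappa$, and the hypothesis $d^\kappa\le\min\{n,m\}$ keeps these tables at most of size $n$ and $m$. The requested entries propagate down the recursion tree. At each level, count the distinct (row-block, column-block) sub-instances that some requested entry hits. Rows and columns with many requested entries are grouped so that one intermediate product serves them all. A degree split is used here: a row with more than $\sqrt{mw/n}$ requests, or a column with more than $\sqrt{nw/m}$ requests, is handled by a batch/outer-product computation. Otherwise the entry is handled by direct residual dot products. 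This balancing gives $\sqrt{nmw}$ up to the residual length $d^\tau$, and summing over levels gives \eqref{eq:partial}. Entry magnitudes stay polynomial, so word operations cost $\OO(1)$.

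\textbf{Main obstacle.} The hard part is designing the sharing rule so that the residual length shrinks by a factor $d^{\Omega(\tau)}$ per level while the number of distinct intermediate products charged to requested entries grows only by $d^{O(\tau^2)}$. The quadratic $\kappa=O(\tau^{-2})$ suggests roughly $1/\tau$ levels each costing a $d^{\tau}$ overhead factor in preprocessing. My first attempt will use Schönhage's identity with block size $d^{\tau}$. I would check that in the requested-entry count each requested entry's contribution is bounded via AM--GM by $d^\tau\sqrt{nm/w}$ amortized.
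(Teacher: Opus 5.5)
\paragraph{Gap 1: Step~1 cannot save anything.} In Step~1 you reduce each requested entry on its own to a shorter ``residual'' dot product, using per-row and per-column precomputed quantities plus corrections that depend on $x$ alone or on $y$ alone. No such per-entry reduction is possible. Suppose $\langle x,y\rangle=\sum_{r=1}^{\ell}\phi_r(x)\psi_r(y)+F(x)+G(y)$ for all $x,y\in\{0,1\}^d$. Evaluating at standard basis vectors shows that the $d\times d$ identity matrix has rank at most $\ell+2$, so $\ell\ge d-2$.

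Savings can only come from products that mix data from several rows of $X$ and several columns of $Y$, and that are reused by several outputs. This is what the paper builds in \Cref{lem:rule}. For an $R\times2$ by $2\times S$ block product with $R=k2^{k-1}$ and $S=3^{k-1}$, there are only $6^k$ shared products, and each is used by $k$ of the $k6^{k-1}$ outputs. Every output equals six shared products minus one private product. This works because in \Cref{clm:cancellation} the variables $x_1,x_2,y_1,y_2,y_3$ are arbitrary. They can therefore carry other outputs' data (the $\ell\ne h$ terms), at the cost of one private correction per output. Your ``Main obstacle'' paragraph concedes that this rule is still to be designed, so the core of the proof is missing.

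\paragraph{Gap 2: the degree split does not give $\sqrt{nmw}$.} Suppose the requests are spread evenly, about $w/n$ per row and $w/m$ per column. Then every row and column is below your thresholds, so everything lands in the direct residual case, which by the rank bound costs about $wd$. For $n=m$, $w=n^2/d$ and $\tau<1/2$, that is $n^2$, against the target $n^2d^{\tau-1/2}$. Even the heavy part, done by plain batch products, costs $d\sqrt{nmw}$, with $d$ in place of $d^\tau$.

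In the paper, $\sqrt{nmw}$ has a different origin. The middle dimension is halved $t=\lceil\log_2 d\rceil$ times. Consider the subproblems whose root path has $s$ shared steps. Their requests are bounded by $6^sw$, the number of requests generated. They are also bounded by $6^snm/k^s$, the number of available output positions, since requests for the same position of the same child merge. Then $\min\{w,nm/k^s\}\le\sqrt{w\,nm/k^s}$ and the binomial theorem bound the total by $\sqrt{nmw}\,(1+6/\sqrt k)^t=O(\sqrt{nmw}\,d^{\tau})$ for $k=\lceil81/\tau^2\rceil$. Input preparation grows by a factor $2^{O(k)}$ per level, giving $(n+m)d^{6k}$. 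So $\kappa=O(\tau^{-2})$ is forced by requiring $\log_2(1+6/\sqrt k)\le\tau$.

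Your heuristic for $\kappa$ is also internally inconsistent. Taking $\kappa$ levels that each cost $d^{O(1/\kappa)}$ multiplies to $d^{O(1)}$. Taking $1/\tau$ levels that each cost $d^{\tau}$ multiplies to $d$. Neither is $d^{\Theta(\tau^{-2})}$.
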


The first term pays for preparing linear combinations of the inputs;
the second pays for multiplying these forms and reconstructing the
requested outputs.
Reducing $\tau$ makes the latter cheaper, at the expense of increasing
$\kappa$ and restricting the allowable middle dimension $d$.
Alman and Vassilevska Williams~\cite{alman2026truly} call such products
\emph{thin} and the prescribed outputs \emph{wanted entries}.

For example, take $\tau=1/4$, let $\kappa=\kappa(1/4)$, and suppose
$d^{\kappa+1}\le n$.
Computing $w=\lfloor n^2/d\rfloor$ specified entries of an $n\times d$ by $d\times n$
product then takes $\OO(n^2/d^{1/4})$ time. Separate dot products
would take $\Theta(n^2)$ operations, and the full product has $n^2$
entries. When $d$ is a fixed positive power of $n$, this saves a
polynomial factor over both approaches.

\paragraph{Fine grained algorithms.}
Integer 3SUM asks whether a list of $n$ integers contains three entries
with distinct indices whose sum is zero.
All-pairs shortest paths (APSP) asks for the shortest-path distance
between every ordered pair of vertices in a directed graph.
We allow negative edge weights but assume there is no negative cycle.
Combining \Cref{thm:matrix_mult} with known reductions gives the following
corollary.

\begin{corollary}\label{cor:apsp_3sum}
There are absolute constants $\delta_1,\delta_2>0$ such that integer
3SUM on $n$ polynomially bounded integers has a deterministic
$O(n^{2-\delta_1})$-time algorithm, and APSP on an $n$-vertex directed
graph with polynomially bounded integer edge weights and no negative
cycle has a deterministic $O(n^{3-\delta_2})$-time algorithm.
\end{corollary}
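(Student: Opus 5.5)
We derive \Cref{cor:apsp_3sum} from \Cref{thm:matrix_mult} in two stages. The first stage solves Offline Set Disjointness (OSD) over a small universe in truly subquadratic time. The second stage invokes the known deterministic fine-grained reductions from integer 3SUM and from APSP with polynomially bounded weights to this version of OSD, as the abstract indicates. We treat those reductions as black boxes, in the form used by Alman and Vassilevska Williams~\cite{alman2026truly}. Roughly, they say the following: if OSD with $n$ sets $A_1,\dots,A_n$ and $n$ sets $B_1,\dots,B_n$ over a universe $[d]$ with $d=n^{\gamma}$, together with $w\le n^2/d$ (or more generally $n^{2-\epsilon}$) query pairs, can be solved in time $O(n^{2-\epsilon'})$, then 3SUM is solvable in $O(n^{2-\delta_1})$ time and APSP in $O(n^{3-\delta_2})$ time. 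Here $\gamma>0$ is some small constant that the reductions allow us to choose.

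\textbf{OSD via selected entries.} Let $X\in\{0,1\}^{n\times d}$ be the incidence matrix whose rows are the sets $A_i$, and let $Y\in\{0,1\}^{d\times n}$ be the incidence matrix whose columns are the sets $B_j$. Then $(XY)_{ij}=|A_i\cap B_j|$. A query $(i,j)$ is therefore answered by checking whether this entry of $XY$ is zero. The entries of $X$ and $Y$ are bounded, so \Cref{thm:matrix_mult} applies directly. Fix $\tau=1/4$ and $\kappa=\kappa(1/4)=O(1)$, and choose $\gamma$ so that $d^{\kappa}=n^{\gamma\kappa}\le n^{1/2}$, i.e.\ $\gamma\le 1/(2\kappa)$. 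The theorem's hypothesis $d^\kappa\le\min\{n,m\}$ then holds with $m=n$. The running time becomes
\[
\OO\bigl(n\cdot n^{1/2}+d^{1/4}\sqrt{n^2 w}\bigr).
\]
For $w\le n^2/d$ the second term is $\OO(n^2d^{-1/4})=\OO(n^{2-\gamma/4})$. In total OSD takes $\OO(n^{2-\min(1/2,\gamma/4)})$ time.

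\textbf{Reductions and the main obstacle.} For 3SUM, we plug this into the deterministic reduction from 3SUM to OSD. That reduction is based on hashing into buckets and enumerating small-sum structures, in the style of Kopelowitz--Pettie--Porat, derandomized as in~\cite{alman2026truly}. For APSP, we use the reduction to $(\min,+)$ product with polynomially bounded weights, then to Exact Triangle, and then to the corresponding OSD/3SUM-type instances. Along the way we track how the saving $n^{\gamma/4}$ propagates into the constants $\delta_1,\delta_2$.

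The main obstacle is matching parameters. The reductions produce OSD instances whose universe size, number of sets and number of queries are specific polynomials in $n$, and these may be unbalanced ($n\ne m$). To handle this, we apply \Cref{thm:matrix_mult} with general $n,m$. We must also verify that the requested count $w$ satisfies $d^{\tau}\sqrt{nmw}\le (nm)^{1-\Omega(1)}$ in the regime the reduction produces. If an instance violates $d^\kappa\le\min\{n,m\}$ or has too many queries, we first try to split it into subinstances, or fall back on the trivial per-query cost $wd$. The subtle part is showing that one of these two options always yields a polynomial saving. Once this holds, composing the polynomial savings through the black-box reductions gives absolute constants $\delta_1,\delta_2>0$, because $\kappa$ is a fixed constant.
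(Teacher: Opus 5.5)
There is a genuine gap, and it sits at the step you call the main obstacle and then leave open: matching the query density produced by the reductions to the exponent $\tau$. You derive your OSD bound only for $w\le n^2/d$, but the reductions produce $w=\Theta(n^2/\sqrt d)$. The deterministic 3SUM reduction (due to Fischer, Kaliciak and Polak, not Alman--Vassilevska Williams) gives $O(n)$ sets over a universe $d=\Theta(n^{2\eta})$ with $O(n^{2-\eta})=O(n^2/\sqrt d)$ queries. The Exact Triangle reduction of Alman--Vassilevska Williams gives instances with at most $n^2/\sqrt d$ queries each.

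With your fixed $\tau=1/4$, the second term of \Cref{thm:matrix_mult} becomes $d^{1/4}\sqrt{n^2\cdot n^2/\sqrt d}=n^2$, so there is no saving at all. Shrinking $\gamma$ does not help, because the reductions tie $w$ to $d$ by $w\approx n^2/\sqrt d$. Your fallbacks fail too. Answering queries separately costs $wd=n^2\sqrt d$, and splitting the queries or the families into batches only inflates the $\sqrt{w}$ term.

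The missing step is to use that \Cref{thm:matrix_mult} holds for every fixed $\tau$ and take $\tau<1/4$. With $\tau=1/8$ and $K=\kappa(1/8)+1$, whenever $d^K\le n$ and $w\le n^2/\sqrt d$ the bound is $\widetilde O(nd^{K-1}+nd^{1/8}\sqrt w)=\widetilde O(n^2/d^{1/8})$. For 3SUM, choose $\eta$ with $2\eta K<1$ and pad to $d=\Theta(n^{2\eta})$. A constant number of query batches absorbs the $O(\cdot)$ in the query count, giving $\widetilde O(n^{2-\eta/4})$.

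For APSP, your black box ``truly subquadratic OSD implies truly subcubic APSP'' does not match the available reduction, and the parameter bookkeeping is where the argument actually lives. Alman and Vassilevska Williams reduce Exact Triangle, for $16\le d\le n$ and $1\le g\le\sqrt d$, to $O(ng)$ OSD instances with parameters $(n,d)$. This comes with overhead $\widetilde O(n^3/g+n^{\omega+o(1)}d^{3/2}+n^2dg)$. Because there are $O(ng)$ oracle calls, the per-instance saving must be a power of $d$ that beats $g$. At the same time, $g$ must be polynomially large for $n^3/g$ to be subcubic.

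With saving $d^{1/8}$, take $g=\lceil d^{1/16}\rceil$ and $d=\lfloor n^{\varepsilon}\rfloor$. Choose $\varepsilon$ small enough that $d^K\le n$ and the last two overhead terms fall below $n^3/d^{1/16}$, using $\omega<3$. The oracle calls then cost $\widetilde O(n^3/d^{1/16})$. The Vassilevska Williams--Williams reduction turns subcubic decision Exact Triangle into subcubic APSP.

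Your sketch names a chain of reductions but never fixes $d$ and $g$ or checks these constraints. Finally, your concern about unbalanced dimensions $n\ne m$ is a red herring: both reductions give $\Theta(n)$ sets on each side after padding.
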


We have not tried to optimize the constants; our focus is on simplicity of exposition.

\subsection{Proof overview}

\paragraph{Selected-entry matrix multiplication.}
The algebraic starting point is the inner/outer-product identity of
Sch\"onhage~\cite{schonhage1981partial}, used by Coppersmith~\cite{coppersmith1982rapid} for
rectangular matrix multiplication. Alman and Vassilevska
Williams~\cite{alman2026truly} adapt this approach to exploit the
overlap among the computations needed by selected entries.
We express this sharing through an exact multiplication rule:
each output is a sum of six products from a common collection,
minus one private correction term. The common collection is small
relative to the number of outputs, and the same intermediate value
can serve many requested entries. We apply the rule recursively to
matrix blocks and count the resulting requests, capping each count
by the number of available output positions. Our simplification is
in the explicit rule and this counting argument; the central
idea of exploiting sparse output requests comes from
Alman and Vassilevska Williams~\cite{alman2026truly}.

\paragraph{Fine grained algorithms.}
In \emph{Offline Set Disjointness}~\cite{kopelowitz2016higher,fischer2023deterministic},
we are given two families of sets and a list of pairs, one set from
each family, and must determine which pairs are disjoint. The product
of the families' incidence matrices records all intersection sizes.
When the universe is small, \Cref{thm:matrix_mult} computes the requested
intersection sizes efficiently.
For 3SUM, we use the direct deterministic reduction of
Fischer, Kaliciak, and Polak~\cite{fischer2023deterministic}.
For APSP, we use the standard reduction to Exact Triangle, followed by
the deterministic reduction to lopsided triangle queries of
Alman and Vassilevska Williams~\cite{alman2026truly}.

\paragraph{Further research.}
How far can selected-entry matrix multiplication be improved?
What is the best possible dependence on the middle dimension $d$ when
computing $w$ arbitrary entries of an $n\times d$ by $d\times n$ product?
Can we reduce the $d^\tau$ loss in
\Cref{thm:matrix_mult} without increasing the input-preparation cost
or requiring thinner matrices? Can comparable savings hold for larger
$d$? For which ranges of $d$ and $w$ can selected-entry algorithms
substantially outperform the best rectangular matrix multiplication
algorithms for computing the full product?
It would also be valuable to find simpler multiplication rules or
entirely different algorithms achieving such savings. Progress on these
questions may also lead to stronger algorithms for 3SUM, APSP, and other fine grained algorithmic problems.

\paragraph{AI methodology.}
Agentin AI coordinated a large research swarm of AI agents to explore alternative proof strategies and candidate simplifications,
contributing to the development of the simplified proof presented
here. The authors directed the mathematical investigation,
critically evaluated the proposed arguments, and developed and
refined the proof and exposition through iterative work with
ChatGPT and Codex. This process included checking calculations,
simplifying the recursive analysis, separating input-preparation
costs from multiplication costs, and improving the presentation.
The authors take responsibility for the final manuscript.

\paragraph{Acknowledgements.} We thank Virginia Vassilevska Williams for helpful comments.

\paragraph{Organization.}
After brief preliminaries, \Cref{sec:sharing} proves the selected-entry
bound, and \Cref{sec:applications} applies it to set disjointness and derives
the 3SUM and APSP consequences using known reductions.

\paragraph{Preliminaries.}

Throughout, $[t]=\{1,\ldots,t\}$ and logarithms have base two.
We write $\OO(f)$ for $O(f)$ up to polylogarithmic factors.
All integer inputs are polynomially bounded in the relevant size
parameter, so they have logarithmic bit length.
Arithmetic on such integers costs only polylogarithmic time, which
is absorbed by $\OO$.

\section{Selected-entry matrix multiplication}\label{sec:sharing}

\Cref{thm:matrix_mult} follows from the main theorem, which gives a more precise bound on the runtime of the selected-entry matrix multiplication algorithm.

\begin{theorem}[Combinatorial bound]\label{thm:partial-count}
Fix an integer $k\ge81$. Let $d^{6k}\le\min\{n,m\}$ and
$t=\lceil\log_2 d\rceil$. Given integer matrices
$X\in\Z^{n\times d}$ and $Y\in\Z^{d\times m}$ with entries of
polynomial magnitude in $n+m$, any $w\ge1$ distinct specified entries
of $XY$ can be computed deterministically in time
\begin{equation}\label{eq:partial-count}
 \OO\left((n+m)d^{6k}
   +\sum_{s=0}^t\binom ts6^s\min\left\{w,\frac{nm}{k^s}\right\}\right).
\end{equation}
\end{theorem}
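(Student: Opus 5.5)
We will prove \Cref{thm:partial-count} by a recursion on the middle dimension $d$. Each level halves $d$ and replaces one product by six products of sub-blocks, and the wanted entries are tracked through the recursion. First pad $d$ to $2^t$ with zero columns and rows. Split $X=[X_1\,X_2]$ and $Y=[Y_1;Y_2]$, so that $XY=X_1Y_1+X_2Y_2$.

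\emph{The multiplication rule.} Partition the rows of $X$ into $k$ groups of size $n/k$ and the columns of $Y$ into $k$ groups of size $m/k$. Following Sch\"onhage's inner/outer-product identity, we want an exact rule of the form
\[
(XY)_{ij}=\sum_{\ell=1}^{6}\bigl(P_\ell\bigr)_{\alpha_\ell(i),\beta_\ell(j)}-C_{ij}.
\]
Here each $P_\ell=A_\ell B_\ell$ is a product of $d/2$-column matrices. The rows of $A_\ell$ are fixed linear combinations of rows of $X_1,X_2$, indexed by a set of size about $n/k$, and similarly the columns of $B_\ell$ are indexed by a set of size about $m/k$. Thus a whole group of $k$ rows, say, collapses onto one aggregated row. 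The correction $C_{ij}$ must be a single private product that is cheap to compute directly.

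The key property is this. A wanted entry $(i,j)$ at one level spawns six wanted entries $(\alpha_\ell(i),\beta_\ell(j))$ in products of size $(n/k)\times(d/2)\times(m/k)$, up to constants. Because many $(i,j)$ map to the same position, the wanted set in each child is the image of the parent's set, and that image has size at most $\min\{w,nm/k^2\}$, or at worst $nm/k$ per child. We then deduplicate.

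\emph{Recursion and counting.} Apply the rule recursively, but at each level choose only some of the halvings to be ``sharing'' steps and let the others be plain splits $X_1Y_1+X_2Y_2$. This choice produces the factor $\binom ts$: a branch that has made $s$ sharing steps out of $t$ levels has $6^s$ subproducts. Each of these has output dimensions shrunk by $k^s$ in total, so its number of wanted entries is at most $\min\{w,nm/k^s\}$. Writing the recursion as a tree whose leaves are $1\times1$ middle dimension, every leaf entry is one scalar product. Summing over leaf types gives the sum in \eqref{eq:partial-count}, with each distinct requested position costing $\OO(1)$.

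The corrections $C_{ij}$ are charged to their parent's wanted entries. Since they are private, there are at most as many of them as wanted entries at that node, and each costs $O(1)$ work after precomputation. This keeps them within the same sum.

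\emph{Input preparation.} Every row used at depth $t$ is a fixed linear combination of original rows. Along a path there are at most $6^t\le d^{3}$ patterns, times aggregations over groups. We precompute all aggregated row vectors of $X$ and column vectors of $Y$ level by level. There are at most $(n+m)\cdot d^{O(1)}$ such vectors, each built from previous ones in $O(1)$ operations, giving the $(n+m)d^{6k}$ term. The condition $d^{6k}\le\min\{n,m\}$ keeps the group sizes integral and nonzero through all $t$ levels, since $k^t\le d^{\log k}$ is much less than $n$. Mapping wanted positions through $\alpha_\ell,\beta_\ell$ and deduplicating costs time proportional to the counts via sorting.

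\emph{Main obstacle.} The hard step is finding the explicit six-term rule with only one private correction that is genuinely cheap. The naive Sch\"onhage identity yields corrections that are themselves full inner products of length $d/2$, which would destroy the savings. I would try $\ell$ ranging over combinations such as $(X_1+\lambda X_2)(Y_1+\lambda^{-1}Y_2)$ summed over group representatives. I would choose the aggregation so that the cross terms cancel in pairs across the six products, leaving a correction that involves only precomputed group sums of size $O(k)$. This also explains why $k\ge81$ is needed for the counting to balance.
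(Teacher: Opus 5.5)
\paragraph{Verdict: there is a genuine gap.} It sits exactly at your ``main obstacle''. You never construct the multiplication rule. The property you rely on is six children of outer size $(n/k)\times(m/k)$, plus a correction $C_{ij}$ that costs $O(1)$ after precomputation. That is not what a Sch\"onhage-type identity gives.

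Your suggested identity does not help. $(X_1+\lambda X_2)(Y_1+\lambda^{-1}Y_2)$ leaves the cross terms $\lambda^{-1}X_1Y_2+\lambda X_2Y_1$, which are full $n\times m$ products. Pairing with $-\lambda$ only returns $2(X_1Y_1+X_2Y_2)$, with no shrinking of the outer dimensions. A rule of the shape you posit would compute the entire product $XY$ in roughly $O(nm)$ operations beyond preparation. That is far stronger than anything the theorem asserts, so it cannot simply be assumed.

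\paragraph{What the paper does instead.} It never makes the correction cheap. \Cref{lem:rule} gives an explicit rule for an $R\times2$ by $2\times S$ product, with $R=k2^{k-1}$ and $S=3^{k-1}$.
\begin{itemize}
\item There are $6^k$ shared products $A_{a,b}B_{a,b}$, indexed by $a\in\{1,2\}^k$ and $b\in\{1,2,3\}^k$.
\item Output $((h,u),v)$ sums the six shared products on the coordinate line $a_{-h}=u$, $b_{-h}=v$. It then subtracts one private product $(x_1+x_2)(y_1+y_2+y_3)$, by \Cref{clm:cancellation}.
\item Each shared product lies on $k$ such lines. This is the source of the reuse ratio $6^k/(RS)=6/k$.
\item Applied to blocks, the private product is a full subproblem of the same shape $(N/R)\times(D/2)\times(M/S)$ as the shared ones. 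It is computed by recursion, not directly.
\end{itemize}

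\paragraph{Consequences for your counting.} The factor $\binom ts$ does not come from choosing which levels share and which are plain splits.
\begin{itemize}
\item A fixed choice yields a single term, not a sum.
\item A plain split $X_1Y_1+X_2Y_2$ sends each request to two children, giving $6^s2^{t-s}$ rather than $6^s$.
\item Your shrinkage is also inconsistent: children of size $(n/k)\times(m/k)$ would shrink positions by $k^2$ per step, not $k$.
\end{itemize}
In the paper, every request spawns one private and six shared requests at every level, so all $2^t$ private/shared patterns occur simultaneously. Fix a pattern with $s$ shared steps at depth $j$. At most $6^sw$ requests arise for it. The subproblems with that pattern have $(RS)^{j-s}(6^k)^s\cdot NM/(RS)^j=6^sNM/k^s$ output positions in total, so merging duplicate requests caps the count there. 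Taking the minimum and summing over the $\binom js$ patterns gives $B_j$, and hence the sum in the theorem.

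The missing ideas are therefore the explicit coordinate-line rule, and recursing on the private products while capping the requests of each pattern by its available positions.
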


The following convenient corollary simplifies the runtime in terms of a single parameter $\theta \in [0,1]$.

\begin{corollary}[Density tradeoff]\label{cor:partial-tradeoff}
Under the assumptions of \Cref{thm:partial-count}, the same algorithm
runs in time
\begin{equation}\label{eq:partial-tradeoff}
 \OO\left((n+m)d^{6k}
   +(nm)^\theta w^{1-\theta}d^{\log_2(1+6/k^\theta)}\right)
 \qquad(0\le\theta\le1).
\end{equation}
\end{corollary}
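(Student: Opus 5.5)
We derive the corollary directly from \Cref{thm:partial-count}, bounding its sum term by term. The first term $(n+m)d^{6k}$ carries over unchanged, so only the sum needs work. The idea is to interpolate each minimum geometrically: for nonnegative reals $a,b$ and any $\theta\in[0,1]$ we have $\min\{a,b\}\le a^{1-\theta}b^{\theta}$. This holds because a weighted geometric mean of two numbers lies between them, so it is at least the smaller one.

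We apply this with $a=w$ and $b=nm/k^s$, which gives for each $s$
\[
\min\left\{w,\frac{nm}{k^s}\right\}\le w^{1-\theta}(nm)^\theta k^{-\theta s}.
\]
Substituting into \eqref{eq:partial-count} and pulling out the factor independent of $s$, the sum is at most
\[
(nm)^\theta w^{1-\theta}\sum_{s=0}^t\binom ts\left(\frac{6}{k^\theta}\right)^s=(nm)^\theta w^{1-\theta}\left(1+\frac{6}{k^\theta}\right)^t.
\]
The equality is the binomial theorem.

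It remains to bound $(1+6/k^\theta)^t$ with $t=\lceil\log_2 d\rceil\le\log_2 d+1$:
\[
\left(1+\frac6{k^\theta}\right)^t\le\left(1+\frac6{k^\theta}\right)\cdot 2^{\log_2 d\,\log_2(1+6/k^\theta)}=\left(1+\frac6{k^\theta}\right)d^{\log_2(1+6/k^\theta)}.
\]
The prefactor $1+6/k^\theta$ is at most $7$, a constant absorbed into the $\OO$. This yields \eqref{eq:partial-tradeoff} for every $\theta\in[0,1]$ simultaneously. The algorithm is the same one, so only the analysis changes and the bound holds for each choice of $\theta$.

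No step is a real obstacle; the argument is an elementary calculation. The only points needing care are the rounding in $t$, which costs just a constant factor, and the degenerate endpoints. At $\theta=0$ the bound reads $w\cdot7^t$. At $\theta=1$ it reads $nm(1+6/k)^t$. Both are consistent with the general inequality, so no separate case analysis is needed.
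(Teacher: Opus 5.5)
Your proposal is correct and follows the paper's proof exactly: you bound each minimum by the weighted geometric mean $w^{1-\theta}(nm/k^s)^\theta$, collapse the sum with the binomial theorem to $(nm)^\theta w^{1-\theta}(1+6/k^\theta)^t$, and convert this to $O(d^{\log_2(1+6/k^\theta)})$ using $t=\lceil\log_2 d\rceil$. Your explicit handling of the ceiling in $t$, which costs a factor of at most $7$, just spells out the step the paper compresses into ``$O(\cdot)$''.
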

\begin{proof}
Using $\min\{x,y\}\le x^{1-\theta}y^\theta$ and the binomial theorem,
\[
 \sum_{s=0}^t\binom ts6^s\min\left\{w,\frac{nm}{k^s}\right\}
 \le(nm)^\theta w^{1-\theta}
       \sum_{s=0}^t\binom ts(6/k^\theta)^s
 =(nm)^\theta w^{1-\theta}(1+6/k^\theta)^t.
\]
Since $t=\lceil\log_2 d\rceil$, the last factor is
$O(d^{\log_2(1+6/k^\theta)})$.
\end{proof}

\begin{proof}[Proof of \Cref{thm:matrix_mult}]
Take $k=\lceil81/\tau^2\rceil$ and $\kappa=6k$.
Then $1+6/\sqrt k\le1+2\tau/3<2^\tau$, so
\Cref{cor:partial-tradeoff} with $\theta=1/2$ gives the claim.
\end{proof}

We prove \Cref{thm:partial-count} in the remainder of this section.

\paragraph{Proof idea.}
We first construct a small matrix multiplication rule in which outputs
use six products from the same shared collection minus one private product per output. We apply this rule
recursively to blocks of larger matrices, following only the intermediate
entries needed for the requested outputs. We count these requests by
their pattern of private and shared steps. Each shared step generates
at most six requests per incoming request, but the count is capped by
the total number of output positions in the corresponding products.
The base case is an outer product, whose requested entries each
require one multiplication. Padding the input dimensions lets us
apply this count directly to the given matrices.

\subsection{The multiplication rule}\label{sec:rule-construction}

For a matrix product $UV$, call $U$ the left input and $V$ the
right input. A \emph{product of linear forms} multiplies an integer
linear combination of entries of $U$ by an integer linear combination
of entries of $V$. We count the additions and subtractions needed
to form these combinations separately from their multiplication.

\begin{lemma}[A shared multiplication rule]\label{lem:rule}
Let $k \ge 2$ and set $R = k 2^{k-1}$ and $S=3^{k-1}$. We construct a multiplication rule for matrices $U\in\Z^{R\times2}$ and
 $V\in\Z^{2\times S}$ as follows:
\begin{enumerate}
 \item \textbf{Products:} There is a common collection of $6^k$ \emph{shared}
 products and one \emph{private} product for each entry of $UV$.
 \item \textbf{Linear forms:} Every product multiplies a linear form
 in $U$ by a linear form in $V$. Each form is a sum of at most $6k$
 signed entries of its respective matrix.
 \item \textbf{Reconstruction:} Each entry of $UV$ is the sum of six
 shared products minus its private product.
\end{enumerate}
\end{lemma}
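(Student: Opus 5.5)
The plan is to build the rule as a $k$-fold ``direct sum'' of a tiny one-coordinate gadget with six options, in the spirit of Sch\"onhage's identity.

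\textbf{Indexing.} Index the shared products by $\sigma=(\sigma_1,\dots,\sigma_k)\in([2]\times[3])^k$, which gives exactly $6^k$ of them. Write $\sigma_l=(a_l,b_l)$. Index rows of $U$ by pairs $(p,x)$ with $p\in[k]$ and $x\in[2]^{k-1}$, so there are $R=k2^{k-1}$ rows. Index columns of $V$ by $y\in[3]^{k-1}$, so there are $S=3^{k-1}$ columns. For $\sigma$ and $p$, let $a_{-p}\in[2]^{k-1}$ and $b_{-p}\in[3]^{k-1}$ denote $\sigma$'s binary and ternary parts with coordinate $p$ deleted.

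\textbf{The forms.} Define
\[
A_\sigma=\sum_{p=1}^k\sum_{e=1}^2 \alpha_{e}(\sigma_p)\,U_{(p,a_{-p}),e},
\qquad
B_\sigma=\sum_{p=1}^k\sum_{e=1}^2 \beta_{e}(\sigma_p)\,V_{e,b_{-p}},
\]
and let the shared product be $P_\sigma=A_\sigma B_\sigma$. Each form then has at most $2k\le6k$ signed entries, provided the gadget coefficients $\alpha_e,\beta_e$ are chosen in $\{0,\pm1\}$; a few more terms per coordinate are also permitted if the gadget needs them.

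\textbf{Reconstruction.} For an output at row $(p,x)$ and column $y$, sum $P_\sigma$ over the six $\sigma$ that agree with $(x,y)$ off coordinate $p$ and range over all six values of $\sigma_p$. Split $A_\sigma=A^{(p)}+A^{\mathrm{rest}}$, where $A^{(p)}$ is the coordinate-$p$ summand, which involves exactly the row $(p,x)$ and depends on $\sigma_p$. The remainder $A^{\mathrm{rest}}$ does not depend on $\sigma_p$ in its row indices, only in its coefficients. Split $B_\sigma$ similarly. The gadget must satisfy three requirements:
\begin{enumerate}
\item[(i)] $\sum_{c}\alpha_e(c)\beta_{e'}(c)=[e=e']$, summing over the six values $c$, so that the $A^{(p)}B^{(p)}$ terms sum to $u_{(p,x)1}v_{1y}+u_{(p,x)2}v_{2y}$;
\item[(ii)] $\sum_c\alpha_e(c)=\sum_c\beta_e(c)=0$, so that the cross terms $A^{(p)}B^{\mathrm{rest}}$ and $A^{\mathrm{rest}}B^{(p)}$ vanish;
\item[(iii)] the term $\sum_c A^{\mathrm{rest}}B^{\mathrm{rest}}$ collapses to a single product of linear forms, which becomes the private product.
\end{enumerate}
Requirement (iii) needs the coordinates $q\ne p$ of $A^{\mathrm{rest}}$ and $B^{\mathrm{rest}}$ to be independent of $\sigma_p$. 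They are not literally independent, since the row index $(q,a_{-q})$ contains $a_p$ and the column index $b_{-q}$ contains $b_p$. I would handle this by choosing the gadget so that the leftover can be written as $\bigl(\sum_c \text{(something in }U)\bigr)\cdot\bigl(\text{something in }V\bigr)$, or by arranging that the coefficient dependence factorizes through the $[2]\times[3]$ product structure. For example, make $\alpha$ depend only on $a_l$ and $\beta$ only on $b_l$, so that the sum over $c=(a,b)$ factors as a sum over $a$ times a sum over $b$. The private product is then $\bigl(\sum_{a}A^{\mathrm{rest}}_a\bigr)\bigl(\sum_b B^{\mathrm{rest}}_b\bigr)$, and each factor is again a signed sum of at most about $6k$ entries.

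\textbf{Main obstacle.} The hard step is finding explicit small-integer gadget vectors $\alpha_e,\beta_e$ on $[2]\times[3]$ that meet all three requirements at once. The orthogonality (i) must hold while the zero-sum conditions (ii) hold, and (ii) must be compatible with the factorization used in (iii). The factored choice pulls against (i), so I expect to need a mixed design. I would first try $\alpha_e(a,b)=\pm[a=e]$ with $\beta_e$ supported on two of the three $b$-values, and then verify the identity by direct expansion.
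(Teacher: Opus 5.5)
Your indexing and the shape of your forms match the paper's exactly: shared products indexed by $([2]\times[3])^k$, rows $(p,a_{-p})$, columns $b_{-p}$, and each output read along the six-point line on which only $\sigma_p$ varies. But the proposal stops where the content of the lemma lies. Two of your three requirements on the gadget are misdiagnosed, and this points your search in the wrong direction.

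For $q\ne p$, the coefficients $\alpha_e(\sigma_q),\beta_e(\sigma_q)$ do not involve $\sigma_p$ at all. Your remark that the dependence is ``only in its coefficients'' has it backwards. The coordinate $\sigma_p=(a,b)$ enters only through the indices, so along the line $A^{\mathrm{rest}}=A^{\mathrm{rest}}_a$ depends only on $a$, and $B^{\mathrm{rest}}=B^{\mathrm{rest}}_b$ depends only on $b$. Hence (iii) holds for every gadget: $\sum_{a,b}A^{\mathrm{rest}}_aB^{\mathrm{rest}}_b=\bigl(\sum_aA^{\mathrm{rest}}_a\bigr)\bigl(\sum_bB^{\mathrm{rest}}_b\bigr)$. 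No factorization of $\alpha,\beta$ is needed, and, as you noticed, such a factorization would contradict (i).

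Conversely, (ii) is too weak. The cross terms are $\sum_e U_{(p,x),e}\sum_b B^{\mathrm{rest}}_b\sum_a\alpha_e(a,b)$ and $\sum_e V_{e,y}\sum_a A^{\mathrm{rest}}_a\sum_b\beta_e(a,b)$. The forms $B^{\mathrm{rest}}_b$ (resp.\ $A^{\mathrm{rest}}_a$) differ from one another, so a global zero sum does not cancel these terms. What cancels them is $\sum_a\alpha_e(a,b)=0$ for every $b$ and $\sum_b\beta_e(a,b)=0$ for every $a$. Your first trial $\alpha_e(a,b)=\pm[a=e]$ violates this column condition, because for each $b$ only one value of $a$ carries a nonzero coefficient.

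The missing idea is Sch\"onhage's. Let the ternary coordinate select the column $e$, with a dummy third value on which the $U$-side is $0$ and the $V$-side is $-(V_{1,\cdot}+V_{2,\cdot})$. Let the binary coordinate carry a sign on the $U$-side and a gate on the $V$-side:
\[
 \alpha_e(a,b)=(-1)^{a-1}[b=e],\qquad
 \beta_e(a,b)=[a=1]\bigl([b=e]-[b=3]\bigr).
\]
Then every column of $\alpha_e$ and every row of $\beta_e$ sums to zero, and $\sum_{a,b}\alpha_e(a,b)\beta_{e'}(a,b)=[e=e']$.

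Substituted into your forms, this gives precisely the paper's forms $A_{a,b}=\sum_\ell(-1)^{a_\ell-1}U_{(\ell,a_{-\ell}),b_\ell}$ and $B_{a,b}=\sum_\ell\ind_{a_\ell=1}V_{b_\ell,b_{-\ell}}$, with $U_{\cdot,3}=0$ and $V_{3,\cdot}=-V_{1,\cdot}-V_{2,\cdot}$. The six products on a line are then $(A^{\mathrm{rest}}_1+U_{(p,x),j})(B^{\mathrm{rest}}_j+V_{j,y})$ and $(A^{\mathrm{rest}}_2-U_{(p,x),j})B^{\mathrm{rest}}_j$ for $j\in[3]$. The paper's cancellation identity recovers the output, with private product $\bigl(\sum_aA^{\mathrm{rest}}_a\bigr)\bigl(\sum_bB^{\mathrm{rest}}_b\bigr)$. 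The $6k$ bound follows because each shared form and each $A^{\mathrm{rest}}_a,B^{\mathrm{rest}}_b$ has at most $2k$ signed entries. Without exhibiting such a gadget and checking these corrected conditions, your proposal is a plan rather than a proof of the lemma.
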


We count products by their labels, even if two formulas happen to be identical.
The multiplication has $RS=k6^{k-1}$ outputs, but only $6^k$ shared
products.  Thus
\[
 \frac{\text{number of shared products}}{\text{number of outputs}}
 =\frac{6^k}{RS}=\frac6k.
\]
As $k$ increases, the shared collection becomes small relative to the
number of outputs. This is the reuse that our recursive
algorithm will exploit.

\par\medskip\noindent\textit{A cancellation identity.}\quad
The construction follows the Sch\"onhage-Coppersmith approach
used by Alman and Vassilevska Williams~\cite{alman2026truly}.
We first give a cancellation identity, then construct shared linear
forms whose restriction to each output has exactly this pattern.
The identity is the
$2\times3$ instance of Sch\"onhage's inner/outer-product
cancellation~\cite{schonhage1981partial}.

For the desired dot product $p_1q_1+p_2q_2$, we add a third coordinate with
$p_3=0$ and $q_3=-q_1-q_2$. The first relation preserves the dot
product, while the zero sum of the $q_j$ cancels a mixed term.
The remaining unwanted terms are removed by one private product.
\Cref{fig:cancellation} illustrates the identity.

\begin{claim}[Cancellation]\label{clm:cancellation}
Let $x_1,x_2,p_1,p_2$ be left variables and
$y_1,y_2,y_3,q_1,q_2$ be right variables. Put $p_3=0$ and
$q_3=-q_1-q_2$. Then
\begin{equation}\label{eq:cancellation}
 \sum_{j=1}^3\bigl[(x_1+p_j)(y_j+q_j)+(x_2-p_j)y_j\bigr]-(x_1+x_2)(y_1+y_2+y_3)=p_1q_1+p_2q_2.
\end{equation}
\end{claim}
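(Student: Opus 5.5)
The plan is to verify \eqref{eq:cancellation} by direct expansion of the left-hand side into monomials, grouping terms by type, and showing that everything except $p_1q_1+p_2q_2$ cancels. The only inputs are the two substitutions $p_3=0$ and $q_1+q_2+q_3=0$.

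First, I would expand each bracket. For each $j$ we have
\[
(x_1+p_j)(y_j+q_j)+(x_2-p_j)y_j = x_1y_j + x_1q_j + p_jy_j + p_jq_j + x_2y_j - p_jy_j .
\]
The $p_jy_j$ terms cancel inside the bracket, which leaves $(x_1+x_2)y_j + x_1q_j + p_jq_j$.

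Second, I would sum over $j=1,2,3$. This gives
\[
(x_1+x_2)(y_1+y_2+y_3) + x_1(q_1+q_2+q_3) + \sum_{j=1}^3 p_jq_j .
\]
The first term is exactly cancelled by the subtracted private product. The middle term vanishes because $q_3=-q_1-q_2$. In the last sum, the $j=3$ term is $p_3q_3=0$, so what remains is $p_1q_1+p_2q_2$, as claimed.

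There is no real obstacle here. The only step needing care is the bookkeeping of the mixed terms: the terms $p_jy_j$ are removed by the sign choice $(x_2-p_j)$, the terms $x_1q_j$ by the zero-sum condition on the $q_j$, and the terms $x_iy_j$ by the private correction. I would present the proof in that order so that each cancellation mechanism is clearly matched to one ingredient of the identity.
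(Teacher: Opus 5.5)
Your proposal is correct and follows essentially the same route as the paper. Both reduce each bracket to $(x_1+x_2)y_j+x_1q_j+p_jq_j$, then cancel the three sums against the private product, the zero-sum relation $q_1+q_2+q_3=0$, and $p_3=0$, respectively.
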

\begin{proof}
For each $j$, the expression in brackets is
$(x_1+x_2)y_j+x_1q_j+p_jq_j$.
Summing over $j$, the first terms cancel the subtracted product,
the second sum to zero because $q_1+q_2+q_3=0$, and the last
sum to $p_1q_1+p_2q_2$ because $p_3=0$.
\end{proof}

\begin{figure}[htbp]
 \centering
\begingroup
\definecolor{cancelprivate}{HTML}{176AAB}
\definecolor{cancelzero}{HTML}{B8660D}
\definecolor{canceltarget}{HTML}{237A57}
\definecolor{cancelgray}{HTML}{66717C}
\definecolor{cancelgrid}{HTML}{DCE2E8}
\begin{tikzpicture}[
 x=1cm,y=1cm,
 every node/.style={inner sep=1pt},
 note/.style={font=\small,align=center}
]
 \path[use as bounding box] (-.05,.28) rectangle (15.55,-4.95);

 \fill[cancelprivate!5] (1.25,-.35) rectangle (14.75,-1.25);
 \draw[cancelgrid,line width=.6pt]
   (1.25,-.35) rectangle (14.75,-2.15);
 \draw[cancelgrid,line width=.6pt] (1.25,-1.25)--(14.75,-1.25);
 \foreach \x in {5.75,10.25}
   \draw[cancelgrid,line width=.6pt] (\x,-.35)--(\x,-2.15);
 \foreach \j [evaluate=\j as \x using 3.5+4.5*(\j-1)] in {1,2,3} {
   \node[text=cancelgray] at (\x,0) {$j=\j$};
   \node at (\x,-.80) {$(x_1+p_{\j})(y_{\j}+q_{\j})$};
   \node at (\x,-1.70) {$(x_2-p_{\j})y_{\j}$};
 }
 \node[text=cancelgray] at (.50,-.80) {$i=1$};
 \node[text=cancelgray] at (.50,-1.70) {$i=2$};

 \draw[-{Stealth[length=5pt]},cancelgray,line width=.7pt]
   (8,-2.36)--(8,-2.96);
 \node[anchor=west,font=\small,text=cancelgray] at (8.22,-2.66)
   {sum all six};

 \node[text=cancelprivate] at (2.65,-3.42)
   {$(x_1+x_2)(y_1+y_2+y_3)$};
 \node at (5.35,-3.42) {$+$};
 \node[text=cancelzero] at (8.00,-3.42)
   {$x_1(q_1+q_2+q_3)$};
 \node at (10.67,-3.42) {$+$};
 \node[text=canceltarget] at (13.15,-3.42)
   {$p_1q_1+p_2q_2$};

 \draw[cancelprivate,line width=1pt] (.55,-3.82)--(4.75,-3.82);
 \draw[cancelzero,line width=1pt] (6.15,-3.82)--(9.85,-3.82);
 \draw[canceltarget,line width=1pt] (11.45,-3.82)--(14.85,-3.82);
 \node[note,text=cancelprivate] at (2.65,-4.40)
   {Removed by subtracting\\one private product};
 \node[note,text=cancelzero] at (8.00,-4.40)
   {Zero by the\\zero-sum relation};
 \node[note,text=canceltarget] at (13.15,-4.40)
   {The desired\\dot product};
\end{tikzpicture}
\endgroup
 \caption{The six products in \cref{eq:cancellation} and their sum,
 with $p_3=0$ and $q_1+q_2+q_3=0$. Subtracting one private product
 leaves the desired dot product.}
 \label{fig:cancellation}
\end{figure}

Note that the variables $x_i,y_j$ can be arbitrary: the identity always recovers
the same two-term dot product. We will choose them so that the six
positive products for different outputs belong to one small collection.

\par\medskip\noindent\textit{The shared products.}\quad
Fix the integer $k\ge2$ from \Cref{lem:rule} for the rest of
this construction. Recall that $R=k2^{k-1}$ and $S=3^{k-1}$, and
let $U\in\Z^{R\times2}$ and $V\in\Z^{2\times S}$.
Label the rows of $U$ by
$(h,u)$, with $h\in[k]$ and $u\in\{1,2\}^{k-1}$, and the
columns of $V$ by $v\in\{1,2,3\}^{k-1}$. To match the three
values of $j$ in \Cref{clm:cancellation}, extend the entry
notation by
\[
 U_{(h,u),3}=0,
 \qquad V_{3,v}=-V_{1,v}-V_{2,v}.
\]
These are abbreviations for linear forms in the original entries.

For a tuple $a$ of length $k$, write $a_{-\ell}$ for the tuple
obtained by deleting coordinate $\ell$, keeping the others in their
original order. For example, $(a_1,a_2,a_3)_{-2}=(a_1,a_3)$.

For each $a\in\{1,2\}^k$ and $b\in\{1,2,3\}^k$, define
\begin{equation}\label{eq:shared-forms}
 A_{a,b}=\sum_{\ell=1}^k(-1)^{a_\ell-1}
                 U_{(\ell,a_{-\ell}),b_\ell},
 \qquad
 B_{a,b}=\sum_{\ell=1}^k\ind_{a_\ell=1}
                 V_{b_\ell,b_{-\ell}}.
\end{equation}
Here $\ind_{a_\ell=1}$ is $1$ if $a_\ell=1$ and $0$ otherwise.
Our shared collection consists of the $6^k$ products $A_{a,b}B_{a,b}$.
Each multiplies a linear form in $U$ by a linear form in $V$.

The output $((h,u),v)$ uses the six products whose labels satisfy
$a_{-h}=u$ and $b_{-h}=v$. Only the pair $(a_h,b_h)$ varies, so
these labels form a six-point coordinate line. Each label belongs
to one such line for each $h\in[k]$; thus each shared product serves
$k$ outputs. \Cref{fig:shared-products} illustrates this pattern for $k=2$.

Along this line, the $\ell=h$ terms in \cref{eq:shared-forms} supply
the entries of the desired dot product. The remaining terms play
the roles of $x_i,y_j$ in \Cref{clm:cancellation}, so one private
product cancels their contribution. We verify this next.

\begin{figure}[htbp]
 \centering
\begingroup
\definecolor{sharedrow}{HTML}{176AAB}
\definecolor{sharedcolumn}{HTML}{B8660D}
\definecolor{sharedpoint}{HTML}{237A57}
\definecolor{sharedgray}{HTML}{66717C}
\definecolor{sharedgrid}{HTML}{DCE2E8}
\definecolor{shareddot}{HTML}{BFC9D3}
\begin{tikzpicture}[
 x=1cm,y=1cm,
 every node/.style={inner sep=1pt},
 detail/.style={font=\small,align=left,anchor=west}
]
 \path[use as bounding box] (-.20,-1.27) rectangle (15.80,5.90);
 \begin{scope}[xshift=2mm]
 \node at (3.325,5.53) {Shared products for $k=2$};

 \foreach \n in {0,...,5} {
   \pgfmathsetmacro{\coord}{1.20+.85*\n}
   \draw[sharedgrid,line width=.4pt] (\coord,.60)--(\coord,4.85);
   \pgfmathsetmacro{\coord}{.60+.85*\n}
   \draw[sharedgrid,line width=.4pt] (1.20,\coord)--(5.45,\coord);
 }
 \foreach \i in {0,...,5}
   \foreach \j in {0,...,5}
     \fill[shareddot] (1.20+.85*\i,.60+.85*\j) circle[radius=2pt];
 \foreach \a/\b [count=\n from 0] in {1/1,1/2,1/3,2/1,2/2,2/3} {
   \node[font=\footnotesize] at (1.20+.85*\n,.18) {$(\a,\b)$};
   \node[font=\small,anchor=east] at (.92,.60+.85*\n) {$(\a,\b)$};
 }
 \node[font=\small] at (3.325,-.33) {First coordinate $(a_1,b_1)$};
 \node[font=\small,rotate=90] at (-.26,2.725) {Second coordinate $(a_2,b_2)$};

 \draw[sharedrow,line width=1.5pt] (1.20,1.45)--(5.45,1.45);
 \foreach \i in {0,...,5}
   \fill[sharedrow] (1.20+.85*\i,1.45) circle[radius=2.6pt];

 \draw[sharedcolumn,line width=1.5pt] (3.75,.60)--(3.75,4.85);
 \foreach \j in {0,...,5}
   \fill[sharedcolumn]
     (3.685,.535+.85*\j) rectangle (3.815,.665+.85*\j);

 \fill[white] (3.75,1.45) circle[radius=4.4pt];
 \fill[sharedpoint] (3.75,1.45) circle[radius=3.1pt];
 \end{scope}

 \node[detail] at (6.55,5.50)
   {A point $(a,b)$ represents the product $A_{a,b}B_{a,b}$.};

 \draw[sharedrow,line width=1.5pt] (6.58,4.55)--(7.13,4.55);
 \fill[sharedrow] (6.58,4.55) circle[radius=2.6pt];
 \fill[sharedrow] (7.13,4.55) circle[radius=2.6pt];
 \node[detail,text=sharedrow] at (7.45,4.55)
   {$h=1$: output $(UV)_{(1,1),2}$};
 \node[detail] at (7.45,4.07)
   {Fix $(a_2,b_2)=(1,2)$; vary $(a_1,b_1)$.};

 \draw[sharedcolumn,line width=1.5pt] (6.58,3.27)--(7.13,3.27);
 \fill[sharedcolumn] (6.515,3.205) rectangle (6.645,3.335);
 \fill[sharedcolumn] (7.065,3.205) rectangle (7.195,3.335);
 \node[detail,text=sharedcolumn] at (7.45,3.27)
   {$h=2$: output $(UV)_{(2,2),1}$};
 \node[detail] at (7.45,2.79)
   {Fix $(a_1,b_1)=(2,1)$; vary $(a_2,b_2)$.};

 \fill[sharedpoint] (6.85,1.99) circle[radius=3.1pt];
 \node[detail,text=sharedpoint] at (7.45,1.99)
   {$a=(2,1)$, $b=(1,2)$};
 \node[detail] at (7.45,1.51)
   {This product is used by both outputs.};

 \node[detail] at (6.55,.55)
   {Each output sums the six products on its line,};
 \node[detail] at (6.55,.12)
   {then subtracts its private product.};
 \node[font=\small] at (7.85,-.94)
   {In $k$ dimensions: six points per line, and $k$ lines through each point.};
\end{tikzpicture}
\endgroup
 \caption{Shared products and outputs for $k=2$. Each point is one
 shared product, and each coordinate line selects the six shared
 products for one output. The highlighted lines reuse the product
 at their intersection.}
 \label{fig:shared-products}
\end{figure}

\begin{proof}[Proof of \Cref{lem:rule}]
Fix an output $((h,u),v)$. Let $a(i)$ be obtained by inserting
$i\in\{1,2\}$ into $u$ at position $h$, and let $b(j)$ be
obtained by inserting $j\in\{1,2,3\}$ into $v$ at position $h$.
For example, for $k=3$ and $h=2$,
\[
 a(i)=(u_1,i,u_2),\qquad b(j)=(v_1,j,v_2).
\]
In particular, $a(i)_{-h}=u$ and $b(j)_{-h}=v$.
Use the six shared products with labels $(a(i),b(j))$, and put
\[
 p_j=U_{(h,u),j},\qquad q_j=V_{j,v}\qquad(1\le j\le3).
\]
Our extended notation gives $p_3=0$ and $q_3=-q_1-q_2$, as required.
In~\cref{eq:shared-forms}, the $\ell=h$ terms contribute
$(-1)^{i-1}p_j$ to $A_{a(i),b(j)}$ and $\ind_{i=1}q_j$ to
$B_{a(i),b(j)}$. For $\ell\ne h$, the terms in $A$ are
independent of $j$, since only $b_h$ varies with $j$; call their
sum $x_i$. Likewise, the remaining terms in $B$ are independent
of $i$, since only $a_h$ varies with $i$; call their sum $y_j$.
Thus
\[
\begin{aligned}
 A_{a(1),b(j)}&=x_1+p_j,&\qquad A_{a(2),b(j)}&=x_2-p_j,\\
 B_{a(1),b(j)}&=y_j+q_j,& B_{a(2),b(j)}&=y_j.
\end{aligned}
\]
Applying \Cref{clm:cancellation} gives
\[
 (UV)_{(h,u),v}
 =\sum_{i=1}^2\sum_{j=1}^3 A_{a(i),b(j)}B_{a(i),b(j)}
   -(x_1+x_2)(y_1+y_2+y_3).
\]
The subtracted term is one product of a linear form in $U$ and a
linear form in $V$; assign it as the private product for this output.
All six positive terms belong to the fixed collection of $6^k$
shared products.

For the bound on the linear forms, expand each abbreviation
$V_{3,v}=-V_{1,v}-V_{2,v}$. Each shared form and each $x_i,y_j$
is then a sum of at most $2k$ signed input entries. Each private
factor combines at most three such forms, giving the bound $6k$.
\end{proof}

\subsection{The recursive computation}

We now use \Cref{lem:rule} to reduce the middle dimension of
a matrix product by a factor of two. A \emph{subproblem} consists
of two input matrices and a set of entries of their product that
we want to compute. The original subproblem is to multiply
$X\in\Z^{n\times d}$ by $Y\in\Z^{d\times m}$ at the $w$
specified positions.

We first explain one reduction step, then bound three costs after
iterating it: preparing input linear forms, multiplying them, and
reconstructing the requested outputs. Padding the input dimensions
will then give \Cref{thm:partial-count}. We first count
arithmetic operations on matrix values, and account for request
processing and intermediate magnitudes at the end.

\par\medskip\noindent\textit{One reduction step.}\quad
Fix an integer $k\ge2$ and let $R=k2^{k-1}$ and $S=3^{k-1}$,
as in \Cref{lem:rule}.

\begin{claim}[One reduction step]\label{clm:reduction-step}
Consider matrices $X\in\Z^{N\times D}$ and $Y\in\Z^{D\times M}$,
with $w$ distinct specified entries of $XY$. Suppose $R\mid N$, $S\mid M$,
and $2\mid D$. This subproblem reduces to $RS=k6^{k-1}$ private subproblems
and $6^k$ shared subproblems. Each multiplies matrices of dimensions
\[
 \left(\frac NR\times\frac D2\right)
 \quad\text{and}\quad
 \left(\frac D2\times\frac MS\right).
\]
The private subproblems receive $w$ requested entries in total,
and the shared subproblems receive at most $6w$. Their answers determine
the $w$ original entries using at most $6w$ additions and subtractions.
\end{claim}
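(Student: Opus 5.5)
We apply \Cref{lem:rule} blockwise, treating blocks as noncommuting ring elements. Partition $X$ into an $R\times 2$ grid of blocks $X_{r,c}$ of size $(N/R)\times(D/2)$, and $Y$ into a $2\times S$ grid of blocks $Y_{c,s}$ of size $(D/2)\times(M/S)$. The product $XY$ then has block form: block $(r,s)$ equals $X_{r,1}Y_{1,s}+X_{r,2}Y_{2,s}$.

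The identity of \Cref{clm:cancellation} and the rule of \Cref{lem:rule} hold verbatim over a noncommutative ring. This is because the proof only expands products $(\sum\alpha)(\sum\beta)$ bilinearly while keeping left factors on the left. It also uses $q_1+q_2+q_3=0$ only through $x_1q_1+x_1q_2+x_1q_3=0$, and never reorders factors. We will say so explicitly, since this is the one conceptual check needed.

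The subproblems are then defined as follows. Each shared product $A_{a,b}B_{a,b}$ becomes a subproblem multiplying the block linear form $A_{a,b}$, an $(N/R)\times(D/2)$ integer matrix, by $B_{a,b}$, a $(D/2)\times(M/S)$ matrix. Each output block $(r,s)$ gets a private subproblem multiplying its private left form by its private right form. This gives $6^k$ shared and $RS$ private subproblems, all of the stated dimensions. The lemma yields the matrix identity $(XY)_{(r,s)\text{ block}}=\sum_{i,j}A_{a(i),b(j)}B_{a(i),b(j)}-P_{r,s}$, where the six labels are determined by $(r,s)$ as in the proof of \Cref{lem:rule}.

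The identity holds entrywise, so we route requests as follows. A specified entry of $XY$ lies in a unique block $(r,s)$, at some local position $(\alpha,\beta)$. We request position $(\alpha,\beta)$ in the private subproblem for $(r,s)$ and in each of the six shared subproblems on the line of $(r,s)$. Distinct original entries map to distinct (block, local position) pairs, so the private subproblems receive exactly $w$ distinct requests in total. Each shared subproblem receives a set of positions, with duplicates merged since a shared product serves $k$ outputs. Its total is therefore at most $6w$, with equality when nothing coincides. Reconstructing each original entry takes five additions to sum the six shared values and one subtraction of the private value, so six operations per entry and at most $6w$ overall.

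The main potential obstacle is the noncommutativity check together with making the requested sets well defined after merging. Both are straightforward: the lemma's algebra is purely bilinear, and merging requests only lowers counts. Forming the block linear forms themselves is input-preparation cost and is not counted here, consistent with the claim's statement.
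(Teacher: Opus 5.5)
Your proposal is correct and follows essentially the same route as the paper. You partition $X$ and $Y$ into $R\times 2$ and $2\times S$ block grids, apply \Cref{lem:rule} with blocks in place of scalars (valid because the identities only expand bilinearly, keeping left factors on the left), and route each request to the same local position in the six shared children and the one private child. You then count private requests exactly, using the one private product per output block, bound shared requests by $6w$ after merging, and reconstruct each entry with six additions or subtractions.
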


\begin{figure}[H]
 \centering
 \textbf{(a) Input preparation}\par\smallskip
 \scalebox{0.9}{
\begingroup
\definecolor{flowblue}{HTML}{176AAB}
\definecolor{flowprivate}{HTML}{B8660D}
\definecolor{flowgrid}{HTML}{DCE2E8}
\begin{tikzpicture}[x=1cm,y=1cm,
 every node/.style={inner sep=2pt,font=\small},
 arrow/.style={-{Stealth[length=4pt]},flowblue,line width=.65pt}]
 \path[use as bounding box] (0,1.90) rectangle (16,6.4);
 \node at (7,6.13) {$X$};
 \fill[flowblue!12] (5.5,4.85) rectangle (7,5.25);
 \fill[flowblue!12] (7,4.45) rectangle (8.5,4.85);
 \draw[flowblue] (5.5,4.45) rectangle (8.5,5.65);
 \draw[flowgrid] (7,4.45)--(7,5.65);
 \foreach \y in {4.85,5.25} \draw[flowgrid] (5.5,\y)--(8.5,\y);
 \node[anchor=west] at (9.3,5.05) {$N\times D$};
 \node[fill=white] at (10.8,4.10) {signed block sums};
 \foreach \x/\col in {2/flowblue,5/flowblue,9/flowprivate,12/flowprivate} {
  \draw[arrow,\col] (7,4.40)--(\x,3.38);
  \draw[\col,fill=\col!4] (\x-.9,2.55) rectangle (\x+.9,3.30);
  \draw[flowgrid] (\x,2.55)--(\x,3.30);
  \foreach \y in {2.80,3.05} \draw[flowgrid] (\x-.9,\y)--(\x+.9,\y);
 }
 \node at (3.5,2.925) {$\cdots$};
 \node at (10.5,2.925) {$\cdots$};
 \node[text=flowblue] at (2.7,2.19) {$6^k$ shared};
 \node[text=flowprivate] at (10.5,2.19) {$RS$ private};
 \node[anchor=west] at (13.35,2.925) {$\frac NR\times\frac D2$};
\end{tikzpicture}
\endgroup}
 \par\medskip
 \textbf{(b) Output reconstruction}\par\smallskip
 \scalebox{0.9}{
\begingroup
\definecolor{answerblue}{HTML}{176AAB}
\definecolor{answerprivate}{HTML}{B8660D}
\definecolor{answergrid}{HTML}{DCE2E8}
\begin{tikzpicture}[x=1cm,y=1cm,
 every node/.style={inner sep=2pt,font=\small},
 up/.style={-{Stealth[length=4pt]},line width=.65pt}]
 \path[use as bounding box] (0,1.80) rectangle (14.4,6.3);
 \node at (7.2,6.08) {$XY$};
 \fill[answerblue!12] (6.7,5.05) rectangle (7.7,5.45);
 \draw[answerblue] (5.7,4.65) rectangle (8.7,5.85);
 \foreach \x in {6.7,7.7} \draw[answergrid] (\x,4.65)--(\x,5.85);
 \foreach \y in {5.05,5.45} \draw[answergrid] (5.7,\y)--(8.7,\y);
 \fill[answerblue] (7.463,5.208) circle[radius=2.3pt];
 \node[anchor=west] at (9.2,5.23) {$(r,c)$};
 \draw[answergrid] (7.60,5.208)--(9.12,5.21);
 \draw[up,answerblue] (7.2,4.07)--(7.2,4.59);
 \draw[answerblue] (1.2,4.07)--(11.2,4.07);
 \draw[answerprivate] (11.2,4.07)--(13.2,4.07);
 \foreach \x [count=\i] in {1.2,3.2,5.2,7.2,9.2,11.2,13.2} {
  \ifnum\i=7 \def\acolor{answerprivate}
  \else \def\acolor{answerblue}\fi
  \draw[up,\acolor] (\x,3.12)--(\x,4.04);
  \fill[\acolor!12] (\x+.19,2.60) rectangle (\x+.57,2.82);
  \draw[\acolor] (\x-.57,2.38) rectangle (\x+.57,3.04);
  \foreach \dx in {-.19,.19}
    \draw[answergrid] (\x+\dx,2.38)--(\x+\dx,3.04);
  \foreach \y in {2.60,2.82}
    \draw[answergrid] (\x-.57,\y)--(\x+.57,\y);
  \fill[answerblue] (\x+.30,2.64) circle[radius=2pt];
 }
 \node[text=answerblue] at (9.2,2.08) {shared};
 \node[text=answerprivate] at (13.2,2.08) {private};
\end{tikzpicture}
\endgroup}
 \caption{One reduction step in \Cref{clm:reduction-step}.
 (a) Signed sums of at most $6k$ parent blocks form each child input,
 shown for $X$; $Y$ is treated in the same way.
 (b) One requested parent entry is recovered from six shared answers
 minus one private answer. Dots mark the same local position $(r,c)$
 in the parent block and child outputs.}
 \label{fig:reduction-step}
\end{figure}

\begin{proof}
Partition $X$ into blocks $X_{i,t}$, with $i\in[R]$ and
$t\in\{1,2\}$, each of size $(N/R)\times(D/2)$.
Partition $Y$ into blocks $Y_{t,j}$, with $j\in[S]$, each of
size $(D/2)\times(M/S)$. The $(i,j)$ block of the output is
\[
 (XY)_{i,j}=X_{i,1}Y_{1,j}+X_{i,2}Y_{2,j}.
\]
This is the same $R\times2$ by $2\times S$ multiplication as in
\Cref{lem:rule}, with matrices in place of scalars. The lemma
still applies: its formulas use only linear combinations and
products of a left input by a right input, always in that order.
It expresses each output block as six shared products minus one
private product. Each such product multiplies a linear combination
of the $X$ blocks by a linear combination of the $Y$ blocks, giving
the stated dimensions.

To request an entry of $XY$, locate its output block $(i,j)$ and
its position within that block. Request that same position in the
six shared products and the one private product used by this block.
The seven returned values recover the original entry with five
additions and one subtraction.

Each child maintains a set of requested positions, so repeated requests
are computed only once. Each output block has its own private product, so the $w$ distinct
original requests give exactly $w$ distinct private requests.
The original requests also generate $6w$ shared requests, so the shared
children have at most $6w$ distinct requests in total.
\end{proof}

\Needspace{6\baselineskip}
\par\medskip\noindent\textit{The bounds after $t$ steps.}\quad
For fixed $k$, iterating the rule $t$ times gives the basic dimensions
$R^t\times2^t$ and $2^t\times S^t$. We also allow multiples of
the outer dimensions, so that arbitrary inputs can be handled by padding.
We iterate the input preparation and output reconstruction steps shown
in \Cref{fig:reduction-step}.

\begin{lemma}[Iterated multiplication rule]\label{lem:iterated-rule}
Fix $k\ge81$, with $R,S$ as above.
For integers $t\ge0$ and $a,b\ge1$, write
\[
 N=aR^t,\qquad D=2^t,\qquad M=bS^t.
\]
For $0\le j\le t$, define the capped request count
\begin{equation}\label{eq:capped-request-count}
 B_j=\sum_{s=0}^j\binom js6^s\min\left\{w,\frac{NM}{k^s}\right\}.
\end{equation}
Given matrices $U\in\Z^{N\times D}$ and $V\in\Z^{D\times M}$,
any $w\ge1$ distinct specified entries of $UV$ can be computed with
the following operation bounds:
\begin{enumerate}
 \item \textbf{Input preparation:}
 $D^{3k}(N+M)$ additions and subtractions to prepare linear
 forms separately in $U$ and in $V$.
 \item \textbf{Scalar multiplication:}
 $B_t$ multiplications of a prepared form in $U$
 by one in $V$.
 \item \textbf{Output reconstruction:}
 $O(tB_t)$ additions and subtractions to recover
 the requested entries.
\end{enumerate}
The algorithm generates $O(tB_t)$ request records.
\end{lemma}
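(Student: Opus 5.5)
The plan is induction on $t$, applying \Cref{clm:reduction-step} at each level. The recursion tree has depth $t$. Each node at depth $j$ is a subproblem with dimensions $(aR^{t-j})\times 2^{t-j}$ by $2^{t-j}\times (bS^{t-j})$. Label each node by its word in $\{\mathrm{priv},\mathrm{sh}\}^j$, recording whether each step went to a private or a shared child. At depth $t$ the middle dimension is $1$, so every requested entry is an outer-product entry and costs exactly one scalar multiplication. I will bound the number of distinct requests at the leaves by grouping leaves according to $s$, the number of shared steps on their path.

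\textbf{Request counting.} Fix a pattern $\sigma\in\{\mathrm{priv},\mathrm{sh}\}^j$ with $s$ shared steps. I will prove by induction on $j$ that the children with pattern $\sigma$ carry at most $6^s w$ requests in total. Each private step preserves the request count, since the private subproblems together receive exactly the requests of their parent. Each shared step multiplies it by at most $6$, by \Cref{clm:reduction-step}.

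\textbf{The cap.} The total number of output positions available at pattern $\sigma$ is (number of nodes)$\times$(output size per node). There are $(RS)^{j-s}(6^k)^s$ nodes, and each has output size $NM/(RS)^j$. Their product is $NM(6^k/RS)^s=NM(6/k)^s$.

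Hence the requests at pattern $\sigma$ number at most $\min\{6^s w,\,6^s NM/k^s\}=6^s\min\{w,NM/k^s\}$. Summing over the $\binom js$ patterns with $s$ shared steps gives at most $B_j$ requests at depth $j$. Taking $j=t$ gives item~2.

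\textbf{Reconstruction and request records.} Each request at depth $j<t$ costs at most $6$ additions and generates $7$ child requests. Summing over levels gives $O(\sum_{j}B_j)$ for both reconstruction operations and records. Since $B_j\le B_t$, this is $O(tB_t)$. For $B_j\le B_t$, note that the term for each $s$ in $B_j$ is dominated by the same term in $B_{j+1}$, because $\binom js\le\binom{j+1}s$.

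\textbf{Input preparation.} This is done for all children regardless of requests, since it is cheap. By \Cref{lem:rule}, each child block is a signed sum of at most $6k$ parent blocks. Over the $RS+6^k\le 2\cdot 6^k$ children, forming a level of left-child inputs from a parent of size $N'\times D'$ costs at most $2\cdot6^k\cdot 6k\cdot N'D'/(2R)$ operations.

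Per step, the number of children multiplies by at most $2\cdot 6^k$, and each child's left input has size $1/(2R)$ of its parent's. So the total left-input size grows by a factor of at most $6^k/R=3^k\cdot 2/k\le 3^k$ per level.

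After $j$ levels the total left-input size is at most $3^{kj}ND$. Each level costs at most $6k\cdot 3^k$ times the previous total. Summing over $t$ levels gives $O(6k\,t\,3^{kt}ND)$. With $D=2^t$, we have $3^{kt}=D^{k\log_2 3}$, and $D^{3k}$ absorbs $6kt\cdot D^{k\log_23+1}$ for $k\ge81$, once $D$ is large enough. The $V$ side is analogous, with shrink factor $2S$ and the ratio $6^k/S=3\cdot2^k$, which is even smaller.

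The main obstacle I expect is this constant bookkeeping: making the operation count come out as literally $D^{3k}(N+M)$ rather than $O(\cdot)$. This includes the small-$D$ cases, where I would check $t=0$ (cost $0$) and $t=1$ directly.
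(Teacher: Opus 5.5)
Your proof is correct after one numerical fix. The request counting, the reconstruction and record bounds, and $B_j\le B_t$ are essentially the paper's argument: grouping by patterns with $s$ shared steps, the bound $6^sw$ from generated requests, and the cap $6^sNM/k^s$ from available positions.

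\textbf{Different route for input preparation.} You build every node's inputs level by level from its parent. The paper computes only the leaf values. It writes each one directly as a signed sum of at most $(6k)^t$ entries of $U$ or $V$, so the total cost is $T^t(a+b)(6k)^t=(6kT)^t(a+b)\le D^{3k}(a+b)$, using $6kT\le 2^{3k}$. That is a one-line bound with no geometric sums and no small-$D$ cases. Your scheme is actually cheaper: the level sums are geometric and dominated by the leaf level, giving $O\bigl(kT^t(a+b)\bigr)$ rather than $(6kT)^t(a+b)$. This gain is irrelevant for the theorem, though, and it costs you the more delicate bookkeeping you anticipated.

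\textbf{The fix.} The number of children per step is $T=RS+6^k=(k+6)6^{k-1}$, not at most $2\cdot6^k$. Since $RS=k6^{k-1}$, the private children outnumber the shared ones by a factor $k/6$; that imbalance is exactly the reuse the rule is designed for. Consequently:
\begin{itemize}
 \item the left-input growth factor per level is $T/(2R)=S/2+6^k/(2R)=3^k(1/6+1/k)$, not $6^k/R$;
 \item the right-input growth factor is $T/(2S)=(k+6)2^{k-2}$, not $3\cdot2^k$.
\end{itemize}
Both are still at most $3^k$, so your per-level cost bound of $6k\cdot3^k$ times the previous total survives.

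\textbf{No caveat on $D$ is needed.} The final inequality $6kt\cdot3^{kt}\cdot2^t\le2^{3kt}$ reduces to $6kt\le\bigl((8/3)^k/2\bigr)^t$, which holds for every $t\ge1$ once $k\ge81$. The right side is handled the same way, and $t=0$ costs nothing. So you can drop the ``once $D$ is large enough'' remark, and the separate check of $t=1$ is unnecessary.
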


\begin{proof}
Apply \Cref{clm:reduction-step} recursively for $t$ steps.
Each step has $T=RS+6^k=(k+6)6^{k-1}$ child products, so the
full recursion has $T^t$ leaves, each an $a\times1$ by $1\times b$
outer product. We compute the $a$ left input values and $b$ right
input values at every leaf, whether or not that leaf receives requests.
We then multiply only the pairs needed for requested leaf entries
and reconstruct the requested outputs up the recursion tree.

\textbf{Input preparation.}
By \Cref{lem:rule}, each entry of a child's left input matrix is a
sum of at most $6k$ signed entries of its parent's left input matrix.
The same holds for the right input matrices.
Substituting these expressions through all $t$ levels writes each
left input value at a leaf as a sum of at most $(6k)^t$ signed entries
of the original matrix $U$, and each right input value as such a sum
of entries of $V$.
There are $a+b$ input values per leaf. Computing these sums directly
from the original inputs uses at most
\[
 T^t(a+b)(6k)^t
 \le D^{3k}(a+b)
 \le D^{3k}(N+M)
\]
additions and subtractions, since $6kT\le2^{3k}$ for $k\ge81$
and $D=2^t$.

\textbf{Requested computations.}
A request specifies one entry of the output matrix of one subproblem.
Starting with the $w$ specified entries of $UV$, we pass requests down
the recursion tree. Each request generates one request in a private
child and six requests in shared children, all at the same local
position. Requests for the same entry of the same child are merged.
Let $Q_j$ be the total number of distinct requests at depth $j$,
counting requests in different subproblems separately. We will show
that $Q_j\le B_j$.

Group the subproblems at depth $j$ by their pattern of shared and
private steps from the root. A pattern specifies the type of child
at each step, but not which child is chosen. Fix a pattern with $s$
shared steps and $j-s$ private steps. We bound its requests in two ways:
\begin{itemize}
 \item \emph{Requests generated.} Each original request has one
 continuation at a private step and at most six at a shared step.
 It therefore generates at most $6^s$ requests with this pattern,
 or at most $6^s w$ over all original requests.
 \item \emph{Available positions.} The full recursion tree has $RS$
 choices of private child and $6^k$ choices of shared child at each
 step. Thus $(RS)^{j-s}(6^k)^s$ subproblems have this pattern.
 Each has an output matrix of size $(N/R^j)\times(M/S^j)$, so their
 total number of output positions is
 \[
 (RS)^{j-s}(6^k)^s\frac{NM}{(RS)^j}
 =6^s\frac{NM}{k^s},
 \]
 using $RS=k6^{k-1}$. Merging repeated requests caps the number
 of distinct requests by this number of positions.
\end{itemize}
Taking the smaller bound for each pattern, and noting that there
are $\binom js$ patterns with $s$ shared steps, gives
\[
 Q_j\le\sum_{s=0}^j\binom js
       \min\left\{6^s w,\,6^s\frac{NM}{k^s}\right\}=B_j.
\]

At depth $t$, the middle dimension is one, so each requested entry
is computed by multiplying two prepared input values. This uses
$Q_t\le B_t$ scalar multiplications. We then work back up the tree:
each requested parent entry is the sum of its six shared answers
minus its private answer, using six additions or subtractions.
Request generation and reconstruction therefore each cost
$O(\sum_{j=0}^{t-1}Q_j)$. Finally, $B_j\le B_t$ for $j\le t$,
since each coefficient $\binom js$ in \cref{eq:capped-request-count}
is nondecreasing in $j$. These two costs are thus $O(tB_t)$.

\end{proof}

\begin{proof}[Proof of \Cref{thm:partial-count}]
If $d=1$, then $t=0$ and the sum in \cref{eq:partial-count} is $w$,
so compute the entries directly. Otherwise, set $D=2^t<2d$.
Pad the outer dimensions to the least multiples $N,M$ of $R^t,S^t$
that are at least $n,m$, respectively, and pad the middle dimension
to $D$. Since $R,S\le2^{3k}$ and $D<2d\le d^2$,
\[
 \max\{R^t,S^t\}\le D^{3k}\le d^{6k}\le\min\{n,m\}.
\]
Hence $N\le2n$ and $M\le2m$.
Apply \Cref{lem:iterated-rule}. Input preparation costs at most
$2(n+m)d^{6k}$. Since $NM\le4nm$, the capped count satisfies
\[
 B_t=\sum_{s=0}^t\binom ts6^s\min\left\{w,\frac{NM}{k^s}\right\}
 \le4\sum_{s=0}^t\binom ts6^s\min\left\{w,\frac{nm}{k^s}\right\}.
\]
The extra factor $O(t)$ for reconstruction and request generation,
and the cost of sorting and deduplicating requests, are absorbed
by $\OO$.

If the input entries have absolute value at most $L\ge1$, each
preparation or reconstruction step combines $O(k)$ values.
All intermediate values are therefore bounded by $(6k)^{O(t)}L^2$.
Since $k$ is fixed, $t=O(\log d)$, and $L$ is polynomially bounded,
these integers have $O(\log(n+m))$ bits. This proves
\cref{eq:partial-count}.
\end{proof}

\section{Applications to fine grained algorithms}\label{sec:applications}

Fine grained algorithms seek to improve the exponent in a problem's
running time. Here our goal is to go below quadratic time for integer
3SUM and below cubic time for APSP. We do this through a convenient
intermediate problem, Offline Set Disjointness, which asks whether
specified pairs of sets intersect.

This problem connects directly to our matrix multiplication algorithm:
the product of two incidence matrices records the intersection sizes,
and we only need the entries corresponding to the queried pairs.
Known reductions then turn a faster algorithm for this intermediate
problem into faster algorithms for 3SUM and APSP. We first give the
set-disjointness algorithm, then apply these reductions.

\subsection{Offline Set Disjointness}\label{sec:set-disjointness}

An instance of the \emph{Offline Set Disjointness} problem consists of two families
$S_1,\ldots,S_n,T_1,\ldots,T_n\subseteq[d]$ and a set
$W\subseteq[n]\times[n]$ of $w$ requested pairs. For each
$(i,j)\in W$, we must determine whether $S_i\cap T_j=\varnothing$.

\begin{lemma}\label{lem:set-disjointness}
There is an absolute integer $K\ge2$ such that, whenever $d^K\le n$
and $w\le n^2/\sqrt d$, Offline Set Disjointness can be solved
deterministically in time $\OO(n^2/d^{1/8})$.
\end{lemma}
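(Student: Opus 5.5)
We form the incidence matrices $X\in\{0,1\}^{n\times d}$ with $X_{i,e}=\ind_{e\in S_i}$ and $Y\in\{0,1\}^{d\times n}$ with $Y_{e,j}=\ind_{e\in T_j}$. Then $(XY)_{i,j}=|S_i\cap T_j|$, so the pair $(i,j)$ is disjoint exactly when this entry is zero. Thus the problem reduces to computing the $w$ entries of $XY$ indexed by $W$. We first remove duplicate pairs from $W$, which only decreases $w$. If $w=0$ there is nothing to do. Otherwise we apply \Cref{thm:matrix_mult} with $m=n$ and a fixed $\tau$ chosen below. The entries lie in $\{0,1\}$, which is certainly polynomially bounded.

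The choice of $\tau$ is the only real decision. With $\kappa=\kappa(\tau)$, the theorem gives time
\[
\OO\bigl(2n d^{\kappa}+d^{\tau}\sqrt{n^2 w}\bigr)
=\OO\bigl(n d^{\kappa}+d^{\tau} n\sqrt w\bigr).
\]
Using $w\le n^2/\sqrt d$, we have $\sqrt w\le n/d^{1/4}$. The second term is therefore at most $n^2 d^{\tau-1/4}$. Taking $\tau=1/8$ makes it $n^2/d^{1/8}$, as required.

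Set $K=\kappa(1/8)+1$; this is an absolute integer, at least $2$. The hypothesis $d^K\le n$ then implies $d^{\kappa}\le n=\min\{n,m\}$, so \Cref{thm:matrix_mult} applies. It also bounds the first term: $n d^{\kappa}\le n\cdot n/d = n^2/d\le n^2/d^{1/8}$.

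The final step reads off which computed entries are zero, which takes $O(w)$ time and is dominated by the bound above. Building $X$ and $Y$ explicitly costs $O(nd)$, also within budget. The edge case $d=1$ (or small $d$) is harmless because the bound then reads $\OO(n^2)$, and listing $W$ already takes $w\le n^2$ time. There is no genuine obstacle here: the main thing to check is that a single constant $K$ simultaneously meets the thinness hypothesis of \Cref{thm:matrix_mult} and absorbs the $nd^{\kappa}$ preparation cost. The inequality $d^{\kappa+1}\le n$ handles both at once.
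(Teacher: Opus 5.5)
Your proposal is correct and follows the paper's proof essentially step for step. Like the paper, you form the incidence matrices, apply \Cref{thm:matrix_mult} with $\tau=1/8$ and $K=\kappa(1/8)+1$, and bound the two terms by $nd^{K-1}\le n^2/d$ and $d^{1/8}n\sqrt w\le n^2/d^{1/8}$, with the $O(nd+w)$ setup cost absorbed.
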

\begin{proof}
Take $\tau=1/8$ in \Cref{thm:matrix_mult}, and set
$K=\kappa(1/8)+1$. Form the incidence matrices
$X\in\{0,1\}^{n\times d}$ and $Y\in\{0,1\}^{d\times n}$.
Since $(XY)_{ij}=|S_i\cap T_j|$, it suffices to compute the entries
indexed by $W$. The theorem gives running time
\[
 \OO\bigl(nd^{K-1}+nd^{1/8}\sqrt w\bigr)
 \le\OO\bigl(n^2/d+n^2/d^{1/8}\bigr)
 =\OO(n^2/d^{1/8}).
\]
Forming the incidence matrices and reading the queries costs
$O(nd+w)$, which is absorbed by this bound.
\end{proof}

\subsection{Known reductions}

To identify the offline set disjointness problem with its graph formulation, use one vertex
for each $S_i$, one for each $T_j$, and a middle vertex for each element
of $[d]$. Join sets to their elements and add an outer edge for every
requested pair. That edge lies in a triangle exactly when the two sets
are not disjoint. This is the \emph{Lopsided All-Edges Sparse Triangle}
formulation~\cite{williams2020monochromatic,alman2026truly}, with a small middle part.

For 3SUM, Fischer, Kaliciak, and Polak~\cite{fischer2023deterministic} derandomized the
set-disjointness reduction of Kopelowitz, Pettie, and Porat~\cite{kopelowitz2016higher}.
For Exact Triangle, Vassilevska Williams and Xu~\cite{williams2020monochromatic} gave a
randomized reduction to the lopsided problem. Chan and Xu~\cite{chan2024simpler}
gave a deterministic reduction to All-Edges Sparse Triangle, but its
construction does not ensure the small middle part needed here.
Alman and Vassilevska Williams~\cite{alman2026truly} obtain the deterministic
lopsided reduction by combining deterministic hashing with the earlier
graph constructions. We use these reductions as black boxes.

\begin{proof}[Proof of \Cref{cor:apsp_3sum}]
\emph{3SUM.}
Fischer, Kaliciak, and Polak~\cite{fischer2023deterministic} show that, for every fixed
$0<\eta<1$, a deterministic truly subquadratic algorithm for Offline
Set Disjointness with $O(n)$ sets, universe size $O(n^{2\eta})$,
set size $O(n^\eta)$, and $O(n^{2-\eta})$ queries gives a deterministic
truly subquadratic algorithm for 3SUM.
Choose $\eta>0$ with $2\eta K<1$.
After padding the families and universe to sizes $\Theta(n)$ and
$d=\Theta(n^{2\eta})$, respectively, the query bound is
$O(n^2/\sqrt d)$.
A constant number of query batches allows \Cref{lem:set-disjointness}
to solve these instances in $\OO(n^{2-\eta/4})$ time, giving the
3SUM conclusion.

\emph{APSP.}
Exact Triangle asks whether an integer-weighted tripartite graph with
$n$ vertices per part contains a triangle whose edge weights sum to
zero. We restrict to weights of magnitude $n^{O(1)}$. Alman and Vassilevska
Williams~\cite{alman2026truly} reduce it deterministically, for
$16\le d\le n$ and an integer $1\le g\le\sqrt d$, to $O(ng)$
set-disjointness instances with parameters $(n,d)$ and at most
$n^2/\sqrt d$ queries each, with additional time
\[
 \OO\bigl(n^3/g+n^{\omega+o(1)}d^{3/2}+n^2dg\bigr),
\]
where $\omega<3$ is the square matrix multiplication exponent.

Choose $d=\lfloor n^\varepsilon\rfloor$ for a sufficiently small fixed
$\varepsilon>0$, and set $g=\lceil d^{1/16}\rceil$.
By \Cref{lem:set-disjointness}, the oracle calls cost
$\OO(ng\cdot n^2/d^{1/8})=\OO(n^3/d^{1/16})$,
as does the first overhead term. Since $\omega<3$, choosing
$\varepsilon$ small enough also ensures $d^K\le n$ and makes the
other two overhead terms smaller than this bound. Thus Exact Triangle takes
$\OO(n^{3-\rho})$ time for some constant $\rho>0$.

The deterministic reduction of
Vassilevska Williams and Williams~\cite{vassilevska2009finding,williams2018subcubic} solves APSP in
$\OO(n^{3-\rho/3})$ time, using only the decision version of
Exact Triangle. Absorbing logarithmic factors gives the claimed
polynomial improvement.
\end{proof}

\bibliographystyle{unsrt}
\bibliography{3sum}
\end{document}